\newcommand{\NP}{\text{\normalfont NP}}
\renewcommand{\P}{\text{\normalfont P}}
\let\phi=\varphi
\renewenvironment{proof}{\begin{pf}}{\qed\end{pf}}
\begin{document}

\title{Vertex Cover Problem Parameterized Above and Below Tight Bounds}

\author{Gregory Gutin\inst{1} \and Eun Jung Kim\inst{1} \and Michael Lampis\inst{2}\and Valia Mitsou\inst{2}}
\institute{  Royal Holloway, University of London, UK\\
  \email{\{gutin|eunjung\}@cs.rhul.ac.uk}\and City University of New York, USA\\
  \email{\{mlampis|vmitsou\}@gc.cuny.edu}}

\date{ }
\maketitle

\begin{abstract}
We study the well-known Vertex Cover problem parameterized above and below tight bounds.
We show that two of the parameterizations (both were suggested by Mahajan, Raman and Sikdar, J. Computer and System Sciences, 75(2):137--153, 2009)
are fixed-parameter tractable and two other parameterizations are W[1]-hard (one of them is, in fact, W[2]-hard).
 \end{abstract}

\pagenumbering{arabic}
\pagestyle{plain}

\section{Introduction}

A parameterized problem $\Pi$ can be considered as a set of pairs
$(I,k)$ where $I$ is the \emph{main part} and $k$ (usually an integer)
is the \emph{parameter}. $\Pi$ is called \emph{fixed-parameter
tractable} (FPT) if membership of $(I,k)$ in $\Pi$ can be decided by a {\em fixed-parameter} algorithm,
i.e., an algorithm of running
time $O(f(k)|I|^{O(1)})$, where $|I|$ denotes the size of $I$ and $f(k)$ is a
computable function
(for further background and terminology on parameterized complexity we
refer the reader to the
monographs~\cite{DowneyFellows99,FlumGrohe06,Niedermeier06}). If the
nonparameterized version of $\Pi$ (where $k$ is just a part of input)
is $\NP$-hard, then the function $f(k)$ must be superpolynomial
provided $\P\neq \NP$. Often $f(k)$ is ``moderately exponential,''
which makes the problem practically feasible for small values of~$k$.
Thus, it is important to parameterize a problem in such a way that the
instances with small values of $k$ are of real interest.

For a graph $G=(V,E)$, a set $C\subseteq V$ is a {\em vertex cover}
if for every edge $uv\in E$ at least one of the vertices $u,v$
belongs to $C$. The well-known classical {\sc Vertex Cover} problem
is the problem of deciding whether a graph $G$ has a vertex cover of
size at most $k.$ This problem is $\NP$-complete and the {\em standard} parameterization
considers $k$ as the (only) parameter.

{\sc Vertex Cover} was named the {\em Drosophila} of fixed-parameter algorithmics
\cite{DowneyFellows99,GuoNieWer07,Niedermeier06,NiedermeierRossmanith00}
as (i) there is a long list of improvements on the exponential function in $k$ of
fixed-parameter algorithms with the currently best exponential bound being below $1.28^k$ \cite{ChenKanjXia05},
(ii) it has applications in various areas including bioinformatics and linear programming
\cite{Cheetham03,DowneyFellows99,FlumGrohe06,GutKarRaz09,Niedermeier06}, (iii) {\sc Vertex Cover} has been a benchmark for developing sophisticated data
reduction and problem kernelization techniques \cite{Abu04}, (iv) research on the problem has led us to new research directions
within parameterized complexity such as counting \cite{ArvRam02}, enumerating \cite{Dam06,Fernau02} and parallel processing
\cite{Abu06,Cheetham03}.

\section{Nonstandard Parameterizations of Vertex Cover}\label{sec:nst}

In the above-mentioned applications of {\sc Vertex Cover}, $k$ is relatively small, but this is not the case in
some other applications. Indeed, let $B$ be a positive integer and consider
the family ${\cal G}_B$ of graphs with all degrees bounded from
above by $B$ \cite{Esperet09,Grigorieff06}. The vertex cover in a
graph from ${\cal G}_B$ on $m$ edges must have at least $m/B$
vertices. However, $m/B$ is not small for $m$ large enough and,
thus, the standard parameterization of {\sc Vertex Cover} is of
little interest for ${\cal G}_B$.

Mahajan, Raman and Sikdar \cite{MahajanRamanSikdar09} observed that
$m/B$ is a tight\footnote{Indeed, consider the disjoint union of
$m/B$ stars $K_{1,B}$.} lower bound on the the minimum cardinality
of a vertex cover of a graph in ${\cal G}_B$  and stated the following
parameterized problem.

\begin{quote}
  {\bfseries Vertex Cover Above Tight Lower Bound-1} (VCL1)\\
  \emph{Instance:} A positive integer $B$, a graph $G=(V,E)\in {\cal G}_B$, a positive integer $k$.\\
  \emph{Parameters:} $k$ and $B$.\\
  \emph{Question:} Is there a vertex cover $C$ of $G$ with at most $m/B+k$ vertices?
\end{quote}

In fact, Mahajan, Raman and Sikdar \cite{MahajanRamanSikdar09}
stated VCL1 as a problem with one parameter, $k$, with $B$ being a
constant. They asked whether the one parameter version of VCL1 is
fixed-parameter tractable. In Section \ref{sec:LB} we prove that
VCL1 is FPT, which implies that the one parameter version of VCL1 is
FPT as well. Since VCL1 has two parameters rather than the usual one, our result can be viewed as a contribution
towards Multivariate Algorithmics as outlined by Fellows \cite{FelIWOCA}.

A variation of {\sc Vertex Cover} which has been studied extensively in
the literature \cite{GuoNieWer07,DomLOkSauVil08} is {\sc Capacitated Vertex Cover}, where every vertex of
the graph has a given capacity which is a limit on the number of its
incident edges it can cover. The $m/B$ lower bound also applies in
this case and we can define the following problem.

\begin{quote}
  {\bfseries Capacitated Vertex Cover Above Tight Lower Bound-1} (CVCL1)\\
  \emph{Instance:} A positive integer $B$, a graph $G=(V,E)\in {\cal G}_B$, a positive integer $k$.\\
  \emph{Parameter:} $k$.\\
  \emph{Question:} Is there a vertex cover $C$ of $G$ with at most $m/B+k$ vertices?
\end{quote}

The standard parameterization of {\sc Capacitated Vertex Cover} is FPT
\cite{GuoNieWer07,DomLOkSauVil08}, but surprisingly CVCL1 turns out to be W[2]-hard as we prove in Section \ref{sec:UB}.

Let $G=(V,E)\in {\cal G}_B$. Observe that the chromatic number $\chi(G)$ of $G$
is at most $B+1$ (one can properly color the vertices of $G$ in at most $B+1$
colors using the greedy algorithm \cite{West01}). Thus, $G$ has an independent
set $I$ of size at least $n/(B+1)$, where $n=|V|.$ Since $V\setminus I$ is a
vertex cover, $G$ has a vertex cover of size at most $nB/(B+1)$. Observe that
$nB/(B+1)$ is a tight upper bound on the minimum size of a vertex cover of $G$
since the disjoint collection of $t$ copies of $K_{B+1}$'s must be covered by
at least $tB$ vertices. Mahajan, Raman and Sikdar \cite{MahajanRamanSikdar09}
formulated the following problem (in fact, its one-parameter version 
with $k$ being the parameter and $b$ being a constant).

\begin{quote}
  {\bfseries Vertex Cover Below Tight Upper Bound-1} (VCU1)\\
  \emph{Instance:} A positive integer $B$, a graph $G=(V,E)\in {\cal G}_B$, and a positive integer $k$.\\
  \emph{Parameter:} $k$ and $B$.\\
  \emph{Question:} Is there a vertex cover $C$ of $G$ with at most $nB/(B+1)-k$ vertices?
\end{quote}

It is easy see that VCU1 is FPT using Brooks' Theorem
\cite{West01}: $\chi(G)\le B$ unless one of the connectivity components of $G$
is $K_{B+1}$ or, if $B=2$ and one of the connectivity components of $G$ is an 
odd cycle. If $\chi(G)\le B$, we have that the
vertex cover of $G$ is at most $n(B-1)/B$.  If $n(B-1)/B < nB/(B+1) -k$ we can
trivially answer {\sc yes}, otherwise we have $ kB(B+1)\ge n$ which gives a problem
kernel.  Thus, to make VCU1 more interesting, one should replace $nB/(B+1)-k$
with $n(B-1)/B-k$. We do not know the parameterized complexity of this
modification of VCU1.

Let us note in passing that if $B$ is not considered a parameter or a constant
but rather a part of the input (that is, if we study the problem on graphs of
unbounded degree) and $k$ is the only parameter, then this version of VCU1 is
W[1]-hard.  To see this, suppose that we are given an instance of the standard
parameterization of {\sc Independent Set} consisting of a graph $G=(V,E)$ on $n$
vertices containing at least one edge and the aim is to check the existence of an independent set of size $k$ 
(or, a vertex cover of size $n-k$).  
Obtain a new graph $H$ by adding to $G$ a pair $x,y$ of new vertices  and 
connecting $x$ to all vertices of $G$ and $y$. Observe that $H$ has $N=n+2$ vertices and
the maximum degree $B$ of $H$ is $N-1$. Thus, the problem for $H$ is whether $H$ has a vertex cover
of size $NB/(B+1)-k=n+1-k$. Since any minimum vertex cover of $H$ must contain $x$ and must not contain 
$y$, the problem for $H$ is equivalent to asking whether $G$ has a vertex cover of size $n-k$. A similar argument
can be applied if we parameterize below $N(B-1)/B$, and, thus, these
parameterizations of vertex cover are mainly of interest for graphs of bounded
degree.

Mahajan, Raman and Sikdar \cite{MahajanRamanSikdar09} have also mentioned the following
parameterization of {\sc Vertex Cover} whose parameterized complexity was determined recently.
This problem is of interest since $\mu$ is a tight lower bound on the minimum size of a vertex cover.

\begin{quote}
{\bfseries Vertex Cover Above Tight Lower Bound-2} (VCL2)\\
    \emph{Instance:} A graph $G=(V,E)$ with a maximum matching of size $\mu$ and a positive integer $k$.\\
  \emph{Parameter:} $k$.\\
  \emph{Question:} Is there a vertex cover $C$ of $G$ with at most $\mu+k$ vertices?
\end{quote}

The parameterized complexity of VCL2 remained open for quite some
time until recently Razgon and O'Sullivan \cite{Razgon08} proved
that {\sc Min 2-Sat Deletion} is FPT and since VCL2 is
fixed-parameter reducible to {\sc Min 2-Sat Deletion}
\cite{Mishra07}, VCL2 is also FPT. (In {\sc Min 2-Sat Deletion}, we are given a CNF formula $F$ with $m$ clauses such that each clause has two literals and asked whether there is a truth assignment that satisfies at least $m-k$ clauses; $k$ is the parameter.)

If $\mu'$ is the size of a {\em maximal} matching $M$ of a graph $G$, $G$ has a vertex cover with $2\mu'$ vertices, the set of vertices of $M.$ Observe that $2\mu'$ is a tight upper bound on the minimum size of a vertex cover\footnote{Consider the disjoint union of $P_3$'s.}. Thus, the following problem is another natural parameterization of {\sc Vertex Cover}.

\begin{quote}
  {\bfseries Vertex Cover Below Tight Upper Bound-2} (VCU2)\\
  \emph{Instance:} A graph $G=(V,E)$, a maximal matching $M$ of $G$ and a positive integer $k$.\\
  \emph{Parameter:} $k$.\\
  \emph{Question:} Is there a vertex cover $C$ of $G$ with at most $2|M|-k$ vertices?
\end{quote}

Unfortunately, VCU2 is W[1]-hard as we show in Section \ref{sec:UB}.

\section{VCL1 is FPT}\label{sec:LB}

In this section we assume that the graph $G$ under consideration is in ${\cal G}_B$, where $B$ is positive integral constant.
The following proposition characterizes VCL1 instances $\{G=(V,E),k=0,B\}$ for which the answer is {\sc yes}. This proposition allows us to check whether $G$ has a vertex cover of size $m/B$ in polynomial time.

\begin{proposition} \label{indep}
A graph $G$ has a vertex cover of size exactly $m/B$ if and only if $G$ is a bipartite graph with no isolated vertices and with one partite set of size $m/B$.
\end{proposition}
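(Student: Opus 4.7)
The plan is a short double-counting argument for the forward direction, with the reverse direction being essentially immediate.

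For the ``if'' direction, I would assume $G$ is bipartite with partite sets $X$ and $Y$, where $|X|=m/B$. Since every edge of a bipartite graph has one endpoint in each side, $X$ is itself a vertex cover of $G$, giving a vertex cover of size exactly $m/B$. (This is also optimal, since $m/B$ is the universal lower bound on the vertex cover number for graphs in $\mathcal{G}_B$.)

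For the ``only if'' direction, let $C\subseteq V$ be a vertex cover with $|C|=m/B$, and count the incidences $\sum_{v\in C}\deg(v)$ two ways. Using the degree bound $\deg(v)\le B$ on the one hand, the sum is at most $B|C|=m$; on the other hand, since $C$ covers every edge, each of the $m$ edges contributes at least one to the sum, so the sum is at least $m$. Equality throughout forces (i) every $v\in C$ has degree exactly $B$, and (ii) every edge has exactly one endpoint in $C$. From (ii), $C$ must be independent (an edge inside $C$ would be counted twice), and $V\setminus C$ must also be independent (an edge inside $V\setminus C$ would not be covered). Hence $G$ is bipartite with parts $C$ and $V\setminus C$, and $|C|=m/B$. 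By (i), no vertex of $C$ is isolated, and isolated vertices in $V\setminus C$ can be assumed removed in preprocessing, since they contribute nothing to $m$, to $B$, or to any vertex cover.

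The argument is clean and the structure is dictated immediately once one notices the tightness condition: equality in the inequality $|C|\cdot B\ge m$ is simultaneously the equality case of two inequalities, and the ``exactly one endpoint in $C$'' conclusion is what buys the bipartite structure. The only mild nuisance is the isolated-vertex bookkeeping on the $V\setminus C$ side, which I view as a preprocessing convention rather than a genuine obstacle.
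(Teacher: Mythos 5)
Your proof is correct and follows essentially the same route as the paper's: both directions rest on the observation that a cover of size $m/B$ forces each cover vertex to cover exactly $B$ edges with no edge covered twice, whence $C$ and $V\setminus C$ are both independent and $G$ is bipartite. Your explicit double-counting of incidences and your remark on isolated vertices in $V\setminus C$ are just slightly more careful renderings of the same argument (the paper simply asserts ``Clearly, $G$ has no isolated vertices'').
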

\begin{proof}
If $G$ is bipartite with no isolated vertices and with one partite set of size $m/B$, then this partite set is clearly a vertex cover. Suppose $G=(V,E)$ has a vertex cover $C$ of size $m/B$. Since a vertex can cover at most $B$ edges, every vertex of $C$ covers exactly $B$ edges and no two of them cover the same edge. Therefore $C$ forms an independent set and thus $G$ is a bipartite graph with bipartite sets $C$ and $V-C$. Clearly, $G$ has no isolated vertices.
\end{proof}

\vspace{2mm}

Now we will prove that VCL1 is fixed-parameter tractable.

\begin{lemma} \label{edgebip}
If the answer to a VCL1 instance is {\sc yes}, then there exists a set $D$ of at most $kB$ edges whose deletion makes $G$ bipartite.
\end{lemma}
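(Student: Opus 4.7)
The plan is to identify the edge set $D$ explicitly, taking $D$ to be the set of edges induced by a hypothetical small vertex cover $C$. The intuition comes from Proposition~\ref{indep}: the extremal case $|C|=m/B$ forces $C$ to be an independent set (and hence $G$ bipartite), so each ``extra'' vertex in $C$ beyond the tight $m/B$ bound should be chargeable to a bounded number of edges inside $C$.

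Concretely, assuming the instance is a \textsc{yes}-instance, fix a vertex cover $C$ with $|C| \le m/B+k$. Let $e_C$ denote the number of edges with both endpoints in $C$, and $e_B$ the number with exactly one endpoint in $C$ (there are no edges with zero endpoints in $C$, since $C$ is a vertex cover). Then $m = e_C + e_B$, while the degree sum gives
\[
2e_C + e_B \;=\; \sum_{v \in C} \deg_G(v) \;\le\; B\,|C| \;\le\; B\!\left(\tfrac{m}{B}+k\right) \;=\; m + Bk.
\]
Subtracting $m = e_C + e_B$ from both sides yields $e_C \le Bk$.

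I would then set $D$ to be the $e_C \le kB$ edges with both endpoints in $C$. After deleting $D$, the set $C$ becomes independent, and $V\setminus C$ is already independent because $C$ is a vertex cover of $G$, so $G-D$ is bipartite with parts $C$ and $V\setminus C$. No step is technically difficult; the only thing to be careful about is the double-counting in the degree-sum identity and the fact that the bound uses the maximum-degree assumption $G \in \mathcal{G}_B$ crucially.
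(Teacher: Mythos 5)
Your proposal is correct and follows essentially the same route as the paper: both take $D$ to be the edges with both endpoints in the cover $C$, derive $|D|\le kB$ from the fact that each of the at most $m/B+k$ cover vertices covers at most $B$ edges while edges inside $C$ are counted twice, and conclude that $G-D$ is bipartite with parts $C$ and $V\setminus C$. Your degree-sum presentation is just a more explicit writing of the paper's inequality $(m/B+k)B\ge m+|D|$.
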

\begin{proof}
Suppose $C$ is a vertex cover of $G$ with at most $m/B+k$ vertices
and let $D\subseteq E$ be a set of edges with both endvertices in
$C$. Then we have $(m/B+k)B\geq m+|D|$ and thus $|D|\leq kB$.
Since no two vertices of $C$ cover the same
edge in $G-D$, $G-D$ has no odd cycle implying that
$G-D$ is bipartite.
\end{proof}

\vspace{3mm}

If deleting the edges of $D\subseteq E$ makes $G$ bipartite, we say
$D$ is an {\em edge bipartization}. Due to Lemma \ref{edgebip}, we
may assume that the input graph $G$ has an {\em edge bipartization}
of size at most $kB$. The problem of deciding whether $G$ has an
edge bipartization of size at most $p$ is known as the {\sc Edge
Bipartization} problem. This problem has been studied extensively and
the best known fixed-parameter algorithm runs in time $O(2^pm^2)$, see
\cite{Guo2006}.

Since we know that there is a small edge bipartization of $G$,
we can solve {\sc Vertex Cover} optimally as follows. Suppose $D$ is an edge
bipartization of $G$ with $|D|\leq kB$,
$C$ is a vertex cover of $G$, and let $C^*=C\cap V_D$, where
$V_D$ is the set of endvertices of edges of $D$. Note that $C^*$ is a
vertex cover of $G[D]$ and let $C'\subseteq C^*$ be a minimal vertex cover of $G[D]$.
Observe that $C\setminus C'$ is a vertex cover of the bipartite graph $G-C'$.
Thus, a vertex cover of $G$ is the union of a minimal vertex cover $C'$ of $G[D]$ and
a vertex cover of the bipartite graph $G-C'$.

Consider the following procedure $\Pi$ to generate vertex covers of $G[D]$:
pick either $u$ or $v$ from every edge $uv\in D$. We call all vertex covers generated by $\Pi$, $\Pi$-{\em vertex covers}.
To see that the set of $\Pi$-vertex covers includes all minimal vertex covers of $G[D]$
observe that if $uv\in D$ and $\{u,v\}\subseteq C'$, where $C'$ is a
minimal vertex cover of $G[D]$, then a neighbor $x$ of $u$ is not
in $C'$. Thus, $u$ can be picked up by $\Pi$ while considering the edge $ux$ and $v$
can be picked up by $\Pi$ while considering the edge $uv.$

Our algorithm proceeds as follows. First, we find an edge
bipartization $D$ of $G$ such that $|D|\leq kB$ using the algorithm
of \cite{Guo2006}. If such an edge bipartization $D$ does not exist,
the answer to VCL1 is {\sc no}. Otherwise, we use $\Pi$ to generate all $\Pi$-vertex
covers of $G[D]$ and for each such vertex cover $C'$, we find a
minimum-size vertex cover $C''$ of the bipartite graph $G-C'$ and
check whether $|C'|+|C''|\le m/B+k.$

Let us evaluate the running time of this algorithm. We can find $D$ in time $O(2^{kB}m^2).$
Clearly, all $\Pi$-vertex covers of $G[D]$ can be generated in $O(2^{kB}n^{O(1)})$ time.
For a vertex cover $C'$, we can find a minimum-size vertex cover $C''$ of the bipartite graph $G-C'$ in time $O(m^2\sqrt{n}).$
Thus, the total running time of our algorithm is $O(2^{kB}n^{O(1)}).$

Thus, we have obtained the following result.

\begin{theorem}
The problem VCL1 can be solved in time $O(2^{kB}n^{O(1)}).$
\end{theorem}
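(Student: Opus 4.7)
The plan is to combine Lemma~\ref{edgebip} with a known FPT algorithm for \emph{Edge Bipartization} and a polynomial-time vertex cover algorithm on bipartite graphs. First, I would invoke the algorithm of~\cite{Guo2006} to decide in time $O(2^{kB}m^2)$ whether $G$ admits an edge bipartization $D$ with $|D|\le kB$. If none exists, then by the contrapositive of Lemma~\ref{edgebip} we may safely return {\sc no}. Otherwise, we have located a small set $D$ of at most $kB$ edges whose removal leaves a bipartite graph, and the remainder of the argument will work inside this structure.

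Next, I would exploit the following structural observation: any vertex cover $C$ of $G$ decomposes as $C' \cup (C\setminus C')$ where $C'$ is a minimal vertex cover of the subgraph $G[D]$ and $C\setminus C'$ is a vertex cover of the bipartite graph $G - C'$. Thus it suffices to enumerate all minimal vertex covers of $G[D]$, and for each one, optimally complete it using a minimum vertex cover of the leftover bipartite graph (which is solvable in polynomial time via maximum matching and König's theorem, running in $O(m^2\sqrt{n})$).

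The key enumeration step is to use the procedure $\Pi$ defined above: for each edge $uv\in D$, pick one of its two endpoints. This generates at most $2^{|D|}\le 2^{kB}$ candidate covers of $G[D]$, and, as argued just before the theorem, every minimal vertex cover of $G[D]$ arises as some such $\Pi$-vertex cover. Hence no minimal $C'$ is missed, so the best $|C'|+|C''|$ over all choices equals the minimum vertex cover size of $G$; we answer {\sc yes} iff this minimum is at most $m/B+k$.

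Putting these pieces together, the total running time is dominated by the $2^{kB}$ calls to a polynomial-time bipartite vertex cover routine plus the initial edge bipartization step, giving $O(2^{kB}n^{O(1)})$ overall. The main conceptual obstacle, which is already handled by Lemma~\ref{edgebip} together with the $\Pi$-enumeration argument, is ensuring that branching only over endpoints of edges in $D$ (rather than over all vertices) still enumerates every relevant minimal cover of $G[D]$; once that is established, the rest is a direct assembly of known algorithms.
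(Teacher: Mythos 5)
Your proposal is correct and follows essentially the same route as the paper's own proof: run the Edge Bipartization algorithm of~\cite{Guo2006} to find $D$ with $|D|\le kB$ (answering {\sc no} if none exists, by Lemma~\ref{edgebip}), enumerate the $2^{|D|}$ $\Pi$-vertex covers of $G[D]$ which include all minimal ones, and complete each with a polynomial-time minimum vertex cover of the remaining bipartite graph. No substantive differences to report.
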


\section{W[2] and W[1]-hardness Results}\label{sec:UB}

\begin{theorem}
CVCL1 is W[2]-hard.
\end{theorem}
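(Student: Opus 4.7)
The plan is to give a parameterized reduction from Set Cover, which is \W[2]-complete when parameterized by the solution size. Given a Set Cover instance with universe $U = \{e_1, \ldots, e_n\}$, family $\mathcal{F} = \{S_1, \ldots, S_m\}$, and parameter $k$, I build a graph $G$ as follows. For each set $S_i$ create a \emph{set vertex} $s_i$ with capacity $|S_i|$; for each element $e_j$ create an \emph{element vertex} $u_j$; add the edge $u_j s_i$ whenever $e_j \in S_i$, so that $u_j$ has $d_j$ neighbours among the set vertices. Take $B = \max(\max_i |S_i|,\ \max_j d_j + 1)$ and attach $B - d_j \ge 1$ \emph{pendant vertices} of capacity $0$ to each $u_j$; set the capacity of $u_j$ to $B - 1$. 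Then $G \in \mathcal{G}_B$, the total edge count of $G$ is $nB$, and the CVCL1 threshold $|E(G)|/B + k$ equals $n + k$. The construction is polynomial and preserves the parameter $k$.

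The correctness proof has two directions. If $\mathcal{C} \subseteq \mathcal{F}$ is a set cover of size at most $k$, the set $C = \{u_1, \ldots, u_n\} \cup \{s_i : S_i \in \mathcal{C}\}$ is a valid capacitated vertex cover of size $n + k$: assign each pendant edge to $u_j$, each set-edge $(u_j, s_i)$ with $S_i \in \mathcal{C}$ to $s_i$, and every remaining edge at $u_j$ back to $u_j$; since at least $t_j \ge 1$ of $u_j$'s set-edges land at some chosen $s_i$, the capacity used at $u_j$ totals $B - t_j \le B - 1$. Conversely, suppose $C$ is a CVC of size at most $n + k$. Each pendant has capacity $0$ so its edge must be covered by $u_j$, forcing $\{u_1, \ldots, u_n\} \subseteq C$. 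Since $u_j$ has degree $B$ but capacity only $B - 1$, at least one edge at $u_j$ is assigned to its other endpoint, which cannot be a pendant; hence some $s_i$ with $e_j \in S_i$ belongs to $C$. Thus $\{S_i : s_i \in C\}$ is a set cover of size at most $|C| - n \le k$.

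The main obstacle is the bookkeeping behind the calibration of $B$ and of the capacities. The bound $B > \max_j d_j$ is what guarantees each $u_j$ has at least one pendant and is hence forced into the cover; the bound $B \ge \max_i |S_i|$ keeps $\Delta(G) \le B$; the element-vertex capacity $B - 1$ creates exactly one edge at each $u_j$ that cannot be covered from $u_j$'s side, so that a chosen $s_i$ must witness $e_j$ being in some selected set; and the pendant capacity $0$ forbids sidestepping the argument by assigning edges cheaply to pendants. With these calibrated, the edge count $nB$ makes the $m/B$ lower bound absorb exactly the $n$ element vertices, so that the slack $k$ beyond the tight bound corresponds bijectively to the size of the sought set cover, yielding a valid FPT reduction from Set Cover to CVCL1 with the parameter $k$ unchanged.
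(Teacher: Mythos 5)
Your reduction is correct and establishes W[2]-hardness, but it takes a genuinely different route from the paper. The paper reduces from \textsc{Dominating Set}: each vertex of the input graph receives a \emph{choice gadget} $K_{B+1,B+2}$ (picking the larger partite set in $k$ of these encodes the dominating set) and a \emph{domination gadget} consisting of a vertex $v_d$ of degree $B+2$ and capacity $B+1$, padded with leaves to normalize its degree; the capacity deficit of exactly one at $v_d$ forces one incident edge onto a neighbour coming from a selected choice gadget. Your reduction from \textsc{Set Cover} (also W[2]-complete under the solution-size parameterization, so a legitimate source) is leaner: the set vertices with capacity $|S_i|$ play the role of the choice gadgets directly, and the pendant padding normalizes every element vertex to degree exactly $B$ so that $|E|/B=n$ accounts precisely for the forced element vertices. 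The engine is the same in both proofs --- a degree-$d$, capacity-$(d-1)$ vertex must push one edge onto a chosen neighbour --- but your construction dispenses with the bipartite selection gadgets entirely, which is a real simplification.

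One point you should make explicit: your argument leans essentially on the pendants having capacity $0$. Under the paper's informal definition (``a limit on the number of its incident edges it can cover'') this is legitimate, but several standard formulations of \textsc{Capacitated Vertex Cover} require capacities to be positive, and with capacity-$1$ pendants your reduction genuinely breaks: a pendant placed in the cover can absorb the single excess edge of its element vertex, so it behaves like a singleton set adjoined to the instance, and the minimum capacitated cover no longer tracks the set cover number (e.g., if $\bigcup\mathcal{F}$ misses one element, a cover of size $n+|\mathcal{C}'|+1$ using one pendant makes the CVCL1 instance a yes-instance while \textsc{Set Cover} has no solution at all). The paper sidesteps this by using only positive capacities together with an exchange argument (a leaf in the cover is swapped for a choice-gadget neighbour of $v_d$). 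Either state the capacity-$0$ convention you are assuming, or add the analogous exchange step.
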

\begin{proof}
We give a parameterized reduction from {\sc Dominating Set}: Given a graph
$G=(V,E)$ with $|V|=n$ and maximum degree $B$, we are asked whether it
has a dominating set of size $k$. We will construct a graph
$G'=(V',E')$ with maximum degree $B+2$ such that $G$ has a dominating
set of size $k$ if and only if $G'$ has a vertex cover of size $|E'|/(B+2)+k$.

For every vertex $v\in V$ construct a choice gadget which will be a
complete bipartite graph $K_{B+1,B+2}$. Also, for every vertex $v\in
V$ construct a domination gadget, which is simply a vertex $v_d$
with $B-d(v)+1$ leaves attached to it, where $d(v)$ is the degree of
$v$ in $G$. Finally, for each vertex $v$ and every $u\in N[v]$,
where $N[v]$ is the closed neighborhood of $v$ in $G$, add an edge
from a different vertex of the larger partite set of $v$'s choice gadget to
the vertex $u_d$ in $u$'s domination gadget. This completes the
construction of $G'$.

The capacities of all the vertices of $G'$ are equal to their
degrees, except for the vertices $v_d$ of the domination gadgets, to
which we give capacities $B+1$, which is one less than their degree.

First, let us calculate the lower bound for this graph. In every
choice gadget we have $(B+1)(B+2)$ edges. Also, there are $d(v)+1$
edges connecting every $v_d$ with the choice gadgets and $B-d(v)+1$
edges connecting it to the attached leaves, so there are $B+2$ edges
incident on each $v_d$. Therefore, in total we have
$|E'|=n(B+1)(B+2)+n(B+2)=n(B+2)^2$. The maximum degree is $B+2$,
therefore we are looking for a vertex cover of size $n(B+2)+k$.

Suppose that the original graph has a dominating set $D$ of size
$k$. In $G'$ we select the following vertices in the vertex cover:
from each choice gadget corresponding to a vertex in $D$ we select
the larger partite set of the bipartite graph ($B+2$ vertices) and we select
the smaller partite set of the bipartite graph ($B+1$ vertices) from all the
other choice gadgets. We also select all vertices $v_d$. In total we
have selected $n(B+1)+k+n=n(B+2)+k$ vertices. It is not hard to see
that this is indeed a vertex cover. It is also a capacitated vertex
cover because the only vertices constrained by the capacities are
the vertices $v_d$. However, because $D$ is a dominating set every
$v_d$ is connected to a choice gadget from which we picked the larger
partite set, thus for every $v_d$ one of its incident edges is covered from
our selection in the choice gadgets and $v_d$ has enough capacity to
cover all of its remaining incident edges.

Now for the converse, suppose that $G'$ has a capacitated vertex
cover of size $n(B+2)+k$. First, note that without loss of
generality we may assume that this vertex cover includes all
vertices $v_d$, because if such a vertex is not in the cover at
least one leaf is in the cover and we can simply exchange the two.
Also, without loss of generality no leaves are in the cover, because
we have already argued that the neighbors of the leaves are in the
cover. Therefore, the only reason to include a leaf may be that the
cover includes some $v_d$ but none of its neighbors in the choice
gadgets, thus exceeding
$v_d$'s capacity. However, in such a case we can exchange the leaf
with one of $v_d$'s neighbors from the choice gadgets, thus covering
at least as many edges without exceeding $v_d$'s capacity. Finally,
in each choice gadget the cover includes either the $B+1$ vertices
of the smaller partite set or the $B+2$ vertices of the larger partite, because
if a single vertex of the larger partite set is in the cover it is clearly
optimal to take all vertices of the larger partite set and none from the
smaller partite set. So we can conclude that there exist $k$ choice gadget
where we have picked all of the larger partite set and we have picked the
smaller partite set from all the others. Now, if this is indeed a capacitated
vertex cover, every $v_d$ has a neighbor in the cover so that its
capacity is not exceeded and as argued previously this neighbor is
from the choice gadgets. Therefore, if we select in $G$ the $k$
vertices which correspond to the $k$ gadgets where we picked the
larger partite set they must form a dominating set.
\end{proof}

\begin{theorem}
VCU2 is W[1]-hard.
\end{theorem}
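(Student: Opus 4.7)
The plan is to give a parameterized reduction from the standard parameterization of \textsc{Independent Set}, which is W[1]-hard. Given an instance $(H,t)$ with $H=(V_H,E_H)$ and $|V_H|=n$, I would build a VCU2 instance $(G',M,k)$ with $k:=t$. For each $v\in V_H$ introduce two fresh vertices $v'$ and $v''$, and attach the path $v - v' - v''$, keeping all original edges of $H$. Take $M:=\{vv' : v\in V_H\}$, so $|M|=n$ and $|V(G')|=3n$.

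First I would verify that $M$ is a maximal matching of $G'$. The unmatched set is exactly $V_H'':=\{v'': v\in V_H\}$; each $v''$ has $v'$ (matched) as its only neighbor, so $V_H''$ is independent and no edge can be added to $M$. This is the one subtle point: maximality, unlike maximumness, is not disturbed by the $H$-edges sitting among the matched vertices of $V_H$, so the reduction is free to carry the full graph $H$ without further padding.

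The main step is the identity $\alpha(G')=n+\alpha(H)$. Each gadget $\{v,v',v''\}$ induces a $P_3$, so any independent set of $G'$ contains at most two of its three vertices, and the only size-$2$ choice is $\{v,v''\}$. The only cross-gadget edges lie inside $V_H$ and coincide with $E_H$. Hence if $S$ is independent in $G'$ and $I:=S\cap V_H$, then $I$ is independent in $H$ and $|S|\le 2|I|+(n-|I|)=n+|I|\le n+\alpha(H)$. Conversely, for any independent set $I$ of $H$, the set $S:=I\cup V_H''$ is independent in $G'$ of size $n+|I|$.

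It remains to translate back. A vertex cover of $G'$ of size at most $2|M|-k=2n-k$ exists iff $G'$ has an independent set of size at least $|V(G')|-(2n-k)=n+k$, which by the identity above is iff $\alpha(H)\ge k=t$. Since the parameter is preserved, this is a valid parameterized reduction, proving VCU2 is W[1]-hard. The only real obstacle is checking that the given $M$ is maximal in the presence of the original edges $E_H$; once that is observed, the $\alpha$ computation is a routine gadget-by-gadget argument on $P_3$'s.
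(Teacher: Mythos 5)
Your reduction is correct, and it is genuinely simpler than the one in the paper. Both arguments reduce from \textsc{Independent Set}, but the paper replaces each vertex $u$ of $H$ by a path $u_1u_2u_3$, discards the original edges, and reroutes each edge $e=uv$ through a separate four-vertex gadget (a triangle $e_u e_v e_w$ with a pendant $e_z$) attached to $u_3$ and $v_3$; its maximal matching has $n+m$ edges, and the backward direction requires a normalization step (forcing the cover onto $\{u_2\}\cup\{e_w\}$ and then analyzing the residual paths $u_3 e_u e_v v_3$). You instead keep $H$ intact, hang a pendant path $v\,v'\,v''$ off each vertex, and rely on the observation you correctly flag as the crux: edges of $H$ lie entirely among $M$-saturated vertices, so they cannot destroy the maximality of $M=\{vv':v\in V_H\}$. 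Your count $\alpha(G')=n+\alpha(H)$ is right (at most two independent vertices per gadget $P_3$, with only the endpoint pair $\{v,v''\}$ achieving two, and conversely $I\cup\{v'':v\in V_H\}$ is independent), and the translation $2|M|-k=2n-k$ versus $|V(G')|-\alpha(G')=2n-\alpha(H)$ closes the equivalence with the parameter preserved. As a small bonus, your $M$ shares the property the paper remarks on for its own matching: it is a \emph{minimum} maximal matching, since each pendant edge $v'v''$ forces one of $vv',v'v''$ into any maximal matching and these forced choices are vertex-disjoint across gadgets, so the hardness is not an artifact of supplying a wastefully large $M$. The paper's heavier edge-gadget machinery does not appear to buy anything that your construction lacks; your version proves the same theorem more economically.
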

\begin{proof}
We give a parameterized reduction from {\sc Independent Set} to VCU2. Let $G=(V,E)$ with parameter $k$ be an instance of {\sc Independent Set}. Let $n=|V|$ and $m=|E|$. We shall construct a new graph $G'=(V',E')$ as follows. We replace each vertex $u\in V$ by a path $u_1u_2u_3$ of length two. For each edge $e=uv\in E$, we build a gadget graph $G_e=(V_e,E_e)$ with $V_e=\{e_u,e_v,e_w,e_z\}$ and $E_e=\{e_ue_v,e_ve_w,e_we_u,e_we_z\}$. Finally let $E'$ have edges $u_3e_u$ and $v_3e_v$ for each edge $e=uv\in E$ plus all the edges mentioned above. We set $M=\{u_2u_3:u\in V\}\cup \{e_ue_w:e=uv\in E\}.$ Observe that $M$ is a maximal matching of $G'$. (In fact, it is a maximal matching of $G'$ with minimum number of edges.) In the following, we show that there is a vertex cover of size at most $n-k$ in $G$ (i.e., $G$ has an independent set with at least $k$ vertices) if and only if there is a vertex cover of size at most $2|M|-k$ in $G'$. Note that $|M|=n+m$ and $2|M|-k=2n+2m-k$.

First, suppose $C$ is a vertex cover in $G$ with at most $n-k$ vertices. Then for $C'\subseteq V'$ we pick up the following vertices: (1) $u_2$ for every $u\in V$, (2) $e_w$ for every $e=uv\in E$, (3) if $u\in C$ and $v\notin C$ for edge $e=uv \in E$, choose $u_3$ and $e_v$, (4) if $u,v\in C$ for edge $uv\in E$, choose $u_3,v_3$ and $e_u$. Observe that $C'$ covers every edge of $G'$ and contains exactly $n+m+m+|C|\leq 2|M|-k$ vertices.

Second, if we have a vertex cover $C'$ of $G'$, then we know a
vertex cover of size at most $|C'|$ can be obtained by replacing a
vertex of degree 1 by its neighbor. So we may assume that
$\{u_2:u\in V\} \cup \{e_w:e=uv\in E\}$ belong to $C'$. The edges of
$G'$ not covered by these vertices are a path $u_3e_u,e_ue_v,e_vv_3$
of length two for every edge $e=uv\in E$; we denote the subgraph of
$G$ induced by all such paths by $G''$. Suppose $C'\leq 2|M| -k =
2n+2m-k$ and let $C''=C''_m\cup C''_v$ be a vertex cover in $G''$ of
size at most $n+m-k$, where $C''_m=C''\cap \{e_u,e_v:e=uv\in E\}$
and $C''_v=C''\cap \{u_3:u\in V\}$. Since at least one of $e_u$ and
$e_v$ must be in $C''$ in order to cover the edge $e_ue_v$, we have
$|C''_m|\geq m$ and thus $|C''_v|\leq n-k$. Moreover, edges covered
by $\{e_u,e_v\}$ can be also covered by either $\{u_3,e_v\}$ or
$\{e_u,v_3\}$. Hence we may assume that $|C''_m|=m$. Now observe
that $C''_v$ includes at least one of $u_3$ and $v_3$ for every edge
$e=uv\in E$ so as to cover those edges in $G''-C''_m$. Therefore
$\{u\in V:u_3\in C''_v\}$ is a vertex cover of $G$.
\end{proof}

\section{Further Research}

We could not determine the parameterized complexity of the next problem mentioned in Section \ref{sec:nst}.
It is likely that VCU is FPT. At least, this follows for the special case of the problem when the size of a maximum clique in $G$
is sufficiently smaller than $B,$ see \cite{Reed98}.

\begin{quote}
  {\bfseries Vertex Cover Below Tight Upper Bound} (VCU)\\
  \emph{Instance:} A positive integer $B$, a graph $G=(V,E)\in {\cal G}_B$, and a positive integer $k$.\\
  \emph{Parameter:} $k$ and $B$.\\
  \emph{Question:} Is there a vertex cover $C$ of $G$ with at most $n(B-1)/B-k$ vertices?
\end{quote}

Regarding the parameterization of vertex cover above $m/B$ we have established
that the problem is FPT when both $k$ and $B$ are considered parameters, and
also that when only $k$ is considered a parameter and the maximum degree is
unbounded the capacitated version of the problem is W[2]-hard. A natural
question left open is what happens to the uncapacitated version of the
problem in the second case.

\subsection*{Acknowledgments} Research of Gutin and Kim was
supported in part by an EPSRC grant.

%\bibliographystyle{abbrv}
%\bibliography{stefan}

\end{document}